\documentclass[10pt,conference]{IEEEtran}
\usepackage{color,graphicx,amsmath,amssymb,stmaryrd,dsfont}
\usepackage{rotating,array,url}

\DeclareMathAlphabet{\eurm}{U}{eur}{m}{n}
\DeclareMathAlphabet{\mathbsf}{OT1}{cmss}{bx}{n}
\DeclareMathAlphabet{\mathssf}{OT1}{cmss}{m}{sl}
\DeclareMathAlphabet{\mathcsf}{OT1}{cmss}{sbc}{n}



\newcommand{\randomvalue}[1]{\eurm{\uppercase{#1}}}

\DeclareSymbolFont{bsfletters}{OT1}{cmss}{bx}{n}  
\DeclareSymbolFont{ssfletters}{OT1}{cmss}{m}{n}
\DeclareMathSymbol{\bsfGamma}{0}{bsfletters}{'000}
\DeclareMathSymbol{\ssfGamma}{0}{ssfletters}{'000}
\DeclareMathSymbol{\bsfDelta}{0}{bsfletters}{'001}
\DeclareMathSymbol{\ssfDelta}{0}{ssfletters}{'001}
\DeclareMathSymbol{\bsfTheta}{0}{bsfletters}{'002}
\DeclareMathSymbol{\ssfTheta}{0}{ssfletters}{'002}
\DeclareMathSymbol{\bsfLambda}{0}{bsfletters}{'003}
\DeclareMathSymbol{\ssfLambda}{0}{ssfletters}{'003}
\DeclareMathSymbol{\bsfXi}{0}{bsfletters}{'004}
\DeclareMathSymbol{\ssfXi}{0}{ssfletters}{'004}
\DeclareMathSymbol{\bsfPi}{0}{bsfletters}{'005}
\DeclareMathSymbol{\ssfPi}{0}{ssfletters}{'005}
\DeclareMathSymbol{\bsfSigma}{0}{bsfletters}{'006}
\DeclareMathSymbol{\ssfSigma}{0}{ssfletters}{'006}
\DeclareMathSymbol{\bsfUpsilon}{0}{bsfletters}{'007}
\DeclareMathSymbol{\ssfUpsilon}{0}{ssfletters}{'007}
\DeclareMathSymbol{\bsfPhi}{0}{bsfletters}{'010}
\DeclareMathSymbol{\ssfPhi}{0}{ssfletters}{'010}
\DeclareMathSymbol{\bsfPsi}{0}{bsfletters}{'011}
\DeclareMathSymbol{\ssfPsi}{0}{ssfletters}{'011}
\DeclareMathSymbol{\bsfOmega}{0}{bsfletters}{'012}
\DeclareMathSymbol{\ssfOmega}{0}{ssfletters}{'012}


\newcommand{\rvC}{{\randomvalue{C}}}

\newcommand{\rvM}{{\randomvalue{M}}}	
\newcommand{\rvN}{{\randomvalue{N}}}	
\newcommand{\rvQ}{{\randomvalue{Q}}}	
\newcommand{\rvR}{{\randomvalue{R}}}	
\newcommand{\rvU}{{\randomvalue{U}}}	
\newcommand{\rvV}{{\randomvalue{V}}}	
\newcommand{\rvX}{{\randomvalue{X}}}  	
\newcommand{\rvY}{{\randomvalue{Y}}}	
\newcommand{\rvZ}{{\randomvalue{Z}}}	






















\newcommand{\calC}{{\mathcal{C}}}

\newcommand{\calM}{{\mathcal{M}}}
\newcommand{\calN}{{\mathcal{N}}}

\newcommand{\calP}{{\mathcal{P}}}

\newcommand{\calR}{{\mathcal{R}}}

\newcommand{\calU}{{\mathcal{U}}}

\newcommand{\calX}{{\mathcal{X}}}
\newcommand{\calY}{{\mathcal{Y}}}

\newcommand{\calZ}{{\mathcal{Z}}}

\newcommand{\E}[2][]{{\mathbb{E}_{#1}}{\left(#2\right)}}       
\newcommand{\Var}[1]{{\text{\textnormal{Var}}{\left(#1\right)}}}       
\renewcommand{\P}[2][]{{\mathbb{P}_{#1}}{\left(#2\right)}}


\newcommand{\V}[1]{{{\mathbb{V}}\!\left(#1\right)}}
\newcommand{\avgI}[1]{{{\mathbb{I}}\!\left(#1\right)}}


\newcommand{\avgH}[1]{{\mathbb{H}}\!\left(#1\right)}






\newcommand{\sts}[3]{T_{#1}^{#2}\!(#3)}















\newcommand{\card}[1]{\ensuremath{\left|{#1}\right|}}           
\newcommand{\abs}[1]{\ensuremath{\left|#1\right|}}              
\newcommand{\eqdef}{\ensuremath{\triangleq}}                    
\newcommand{\intseq}[2]{\ensuremath{\llbracket{#1},{#2}\rrbracket}}  


\newtheorem{lemma}{Lemma}
\newtheorem{example}{Example}

\newtheorem{proposition}{Proposition}
\newtheorem{definition}{Definition}
\newtheorem{remark}{Remark}

\renewcommand{\leq}{\leqslant}
\renewcommand{\geq}{\geqslant}
\graphicspath{{graphics/}}

\begin{document}

\title{On Secure Communication with Constrained Randomization}

\author{\IEEEauthorblockN{Matthieu R. Bloch}
\IEEEauthorblockA{School of Electrical and Computer Engineering\\
Georgia Institute of Technology\\
Atlanta, Georgia 30332--0250\\
Email: matthieu.bloch@ece.gatech.edu}
\and
\IEEEauthorblockN{J\"org Kliewer}
\IEEEauthorblockA{School of Electrical and Computer Engineering\\
New Mexico State University\\
Las Cruces, New Mexico 88003-8001\\
Email: jkliewer@nmsu.edu}}

\maketitle

\begin{abstract}
  In this paper, we investigate how constraints on the randomization in the encoding process affect the secrecy rates achievable over wiretap channels. In particular, we characterize the secrecy capacity with a rate-limited local source of randomness and a less capable eavesdropper's channel, which shows that limited rate incurs a secrecy rate penalty but does not preclude secrecy. We also discuss a more practical aspect of rate-limited randomization in the context of cooperative jamming. Finally, we show that secure communication is possible with a non-uniform source for randomness; this suggests the possibility of designing robust coding schemes.
\end{abstract}

\section{Introduction}
\label{sec:introduction}

The wiretap channel model~\cite{Wyner1975,Csiszar1978} has attracted much attention in recent years because of its potential to strengthen the security of communication systems~\cite{InfoTheoreticSec,Bloch2011}. Although this model provides a convenient abstraction to design codes for secure communication (see~\cite{Mahdavifar2011} and reference therein), it relies on two implicit simplifying assumptions. First, the model assumes that the transmitter knows the statistics of the channel. Second, the model assumes that the transmitter has access to an arbitrary local source of randomness, whose statistics can be optimized as part of the code design. In practice, however, these assumptions are unlikely to be perfectly guaranteed. For instance, an eavesdropper has little incentive to help characterize the channel statistics and, realistically, the legitimate parties may only have approximate knowledge of the true statistics. Similarly, the statistics of the local source of randomness may be imperfectly known, or the source may only provide a limited rate of randomness.

Secure communications with imperfect channel knowledge have already been the subject of previous investigations. For instance, several works have studied \emph{compound} wiretap channels (see~\cite{Liang2009} and references therein), in which the transmitter only knows that its channel belongs to a set of possible channels. Secure communication is often possible but the best channel to the eavesdropper usually limits secrecy rates. Other works have investigated the secrecy capacity of state-dependent channels under different assumptions regarding state information (see~\cite{InfoTheoreticSec,Bloch2011} and references therein). In another approach,~\cite{Muramatsu2009} has shown the existence of \emph{universal} wiretap codes, which guarantee secrecy and reliability as soon as the channel capacity of the eavesdropper's channel is low enough.

In contrast to the problem of channel knowledge, little attention has been devoted to the problem of imperfect local sources of randomness. In particular, the questions of how much randomness is required to guarantee secrecy and how sensitive are secure communication codes to imperfections in randomness are still largely open.

In this paper, we provide partial answers to these questions. Our main contributions are 1)~the characterization of secrecy capacity with a rate-limited source of randomness and a less capable eavesdropper's channel, 2) practical considerations on the effect of limited randomness for cooperative jamming, and~3)~the derivation of a sufficient condition for secure communication with a non-uniform randomization. 

The remainder of the paper is organized as follows. Section~\ref{sec:problem-statement} introduces the wiretap channel model used to analyze the effect of constrained randomization and presents our results on the secrecy-capacity of wiretap channels with a rate-limited local source of randomness. Section~\ref{sec:discussion} discusses rate-limited randomness in the context of cooperative jamming. Finally, Section~\ref{sec:secure-comm-with} discusses the possibility of secure communication with a non-uniform local source of randomness that cannot be processed. 
 
\section{Rate-Limited Randomness: Theoretical Considerations}
\label{sec:problem-statement}

Unless otherwise specified, we consider a discrete wiretap channel $\left(\calX,W_{\rvY\rvZ|\rvX},\calY\times\calZ\right)$, characterized by a finite input alphabet $\calX$, two finite output alphabets $\calY$ and $\calZ$, and transition probabilities $p_{\rvY\rvZ|\rvX}$. As illustrated in Figure~\ref{fig:wireta_channel_model}, we assume that the transmitter (Alice) wishes to transmit a secret message to the receiver observing $\rvY^n$ (Bob), in the presence of an eavesdropper observing $\rvZ^n$ (Eve). The channel $\left(\calX,W_{\rvY|\rvX},\calY\right)$ is called the main channel while the channel $\left(\calX,W_{\rvZ|\rvX},\calZ\right)$ is called the eavesdropper's channel. We assume the eavesdropper's channel is less capable, that is for any input $\rvX$ we have $\avgI{\rvX;\rvZ}\leq \avgI{\rvX;\rvY}$. 
\begin{figure}[t]
  \centering
  \scalebox{0.5}{\input{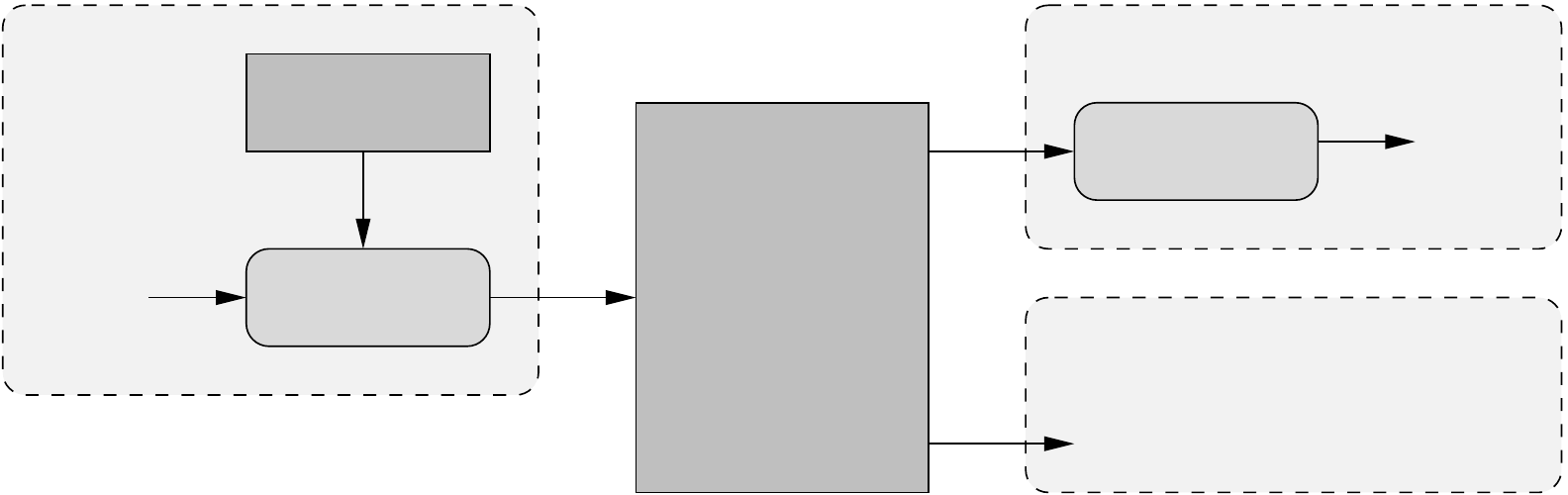_t}}
  \caption{Communication over a randomness-limited wiretap channel.}
  \label{fig:wireta_channel_model}
\end{figure}  
The encoding process may be stochastic, but the only source of randomness is a discrete memoryless\footnote{The assumption of a memoryless source is a matter of convenience, and the proofs in the appendices generalize easily to arbitrary sources.} source $(\calR,p_R)$ with known alphabet $\calR$ and known statistics $p_\rvR$. This model captures a situation in which the transmitter does not have access to a infinite pool of random numbers, and those must be generated on-the-fly during encoding from a source of randomness (thermal noise, photon counting). In addition, it forces us to specify explicitly how to use the randomness provided by the source in the encoding process.\medskip{}

\begin{definition}
  A $(2^{nR},n)$ wiretap code $\calC_n$ for the discrete wiretap channel $\left(\calX,p_{\rvY\rvZ|\rvX},\calY\times\calZ\right)$ with local source of randomness $(\calR,p_R)$ consists of the following.
  \begin{itemize}
  \item a message alphabet $\calM = \intseq{1}{2^{nR}}$;
  \item an encoding function $e:\calM\times\calR^n\rightarrow\calX^n$;
  \item a decoding function $f:\calY^n\rightarrow \calM\cup\{?\}$.
  \end{itemize}
  The performance of $\calC_n$ is measured in terms of the average probability of error $P_e(\calC_n)\eqdef \P{\rvM\neq \hat\rvM|\calC_n}$ and of the secrecy leakage $L(\calC_n)\eqdef \avgI{\rvM;\rvZ^n|\calC_n}$
\end{definition}\medskip{}

\begin{definition}
  A rate $R$ is achievable if there exists a sequence of $(2^{nR},n)$ wiretap codes $\{\calC_n\}_{n\geq 1}$ such that
  \begin{align*}
    \lim_{n\rightarrow\infty}P_e(\calC_n)=0\quad\text{and}\quad\lim_{n\rightarrow\infty}L(\calC_n) = 0.
  \end{align*}
  The (strong) secrecy capacity with rate-limited randomness $C_s$ is defined as the supremum of all achievable rates.
\end{definition}

\begin{remark}
  The definition of a wiretap code above implicitly allows the encoder to process the observations obtained from the local source of randomness. In particular, the encoder can remove a possible bias in the randomness. What happens when the encoder does not perfectly process the local source is discussed in Section~\ref{sec:secure-comm-with}.
\end{remark}

\begin{proposition}
\label{prop:randomness_limited_secrecy_capacity}
  The secrecy capacity of a wiretap channel $(\calX,W_{\rvY\rvZ|\rvX},\calY\times\calZ)$ with a rate-limited source of local randomness $(\calR,p_\rvR)$ and a less capable eavesdropper's channel\footnote{We used the less capable assumption to avoid dealing with the problem of channel prefixing. Days before submitting the current paper,~\cite{Watanabe2012} was posted on ArXiv and independently solved the general case. Proposition~\ref{prop:randomness_limited_secrecy_capacity} appears as~\cite[Corollary 12]{Watanabe2012}.} is
  \begin{align*}
    C_s = \max_{p_{\rvU\rvV\rvX\rvY\rvZ}\in\calP}\left(\avgI{\rvX;\rvY|\rvU}-\avgI{\rvX;\rvZ|\rvU}\right)
  \end{align*}
  where the set $\calP$ is the set of distributions $p_{\rvU\rvX\rvY\rvZ}$ that factorize as $p_{\rvU\rvX\rvY\rvZ} = p_{\rvU}p_{\rvV|\rvU}p_{\rvX|\rvV}W_{\rvY\rvZ|\rvX}$  and with $\avgI{\rvX;\rvZ|\rvU} \leq \avgH{\rvR}$.
\end{proposition}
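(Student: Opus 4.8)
The plan is to establish the two matching inequalities $C_s\geq\max_\calP(\cdots)$ (achievability) and $C_s\leq\max_\calP(\cdots)$ (converse) separately, treating the rate-limited randomness as an additional entropy constraint grafted onto the classical Csiszár--Körner wiretap argument.

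For achievability I would fix a distribution $p_\rvU p_{\rvV|\rvU}p_{\rvX|\rvV}$ in $\calP$ and build a two-layer superposition code. An outer ``cloud-center'' layer carries a non-confidential message described by $\rvU^n\sim p_\rvU$, decoded by Bob at rate up to $\avgI{\rvU;\rvY}$; its sole role is to precondition the input so that, conditioned on $\rvU$, the residual coding on $\rvX$ demands only randomization rate $\avgI{\rvX;\rvZ|\rvU}$ rather than the unconditional $\avgI{\rvX;\rvZ}$. Conditioned on each $\rvU^n$ I would superimpose a wiretap codebook of $2^{n(R_s+R_r)}$ sequences drawn from $p_{\rvX|\rvU}$ (taking $\rvV=\rvX$, which is optimal under the less-capable hypothesis, so that $\rvV$ is present in $\calP$ for generality but collapses here), with secret rate $R_s=\avgI{\rvX;\rvY|\rvU}-\avgI{\rvX;\rvZ|\rvU}-\epsilon$ and randomization rate $R_r=\avgI{\rvX;\rvZ|\rvU}+\epsilon$. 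Reliability for Bob follows from $R_s+R_r<\avgI{\rvX;\rvY|\rvU}$, and strong secrecy from a channel-resolvability argument: since $R_r>\avgI{\rvX;\rvZ|\rvU}$, the induced output law $p_{\rvZ^n|\rvM,\rvU^n}$ concentrates, in expectation over the random codebook, around a message-independent target, forcing $L(\calC_n)=\avgI{\rvM;\rvZ^n}\to 0$. The one genuinely new ingredient is that the $nR_r$ randomization symbols must be produced by the source $(\calR,p_\rvR)$; invoking the Remark that the encoder may process its observations, I would pass $\rvR^n$ through a randomness extractor, which yields roughly $n\avgH{\rvR}$ nearly uniform bits by the AEP. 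Feasibility of supplying $R_r\approx\avgI{\rvX;\rvZ|\rvU}$ nats therefore requires exactly $\avgI{\rvX;\rvZ|\rvU}\leq\avgH{\rvR}$, which is the defining constraint of $\calP$.

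For the converse, given codes with $P_e(\calC_n)\to 0$ and $L(\calC_n)\to 0$, I would start from Fano's inequality and the vanishing leakage to obtain $nR\leq\avgI{\rvM;\rvY^n}-\avgI{\rvM;\rvZ^n}+n\epsilon_n$, then single-letterize with the Csiszár sum identity using $\rvU_i=(\rvY^{i-1},\rvZ_{i+1}^n)$ and a uniform time-sharing index. The auxiliary $\rvV_i=(\rvM,\rvU_i)$ satisfies $\rvU_i-\rvV_i-\rvX_i-(\rvY_i,\rvZ_i)$, and the less-capable hypothesis lets me replace $\rvV_i$ by $\rvX_i$ in the difference of conditional mutual informations, yielding $R\leq\avgI{\rvX;\rvY|\rvU}-\avgI{\rvX;\rvZ|\rvU}+o(1)$. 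To recover the randomness constraint I would use that $\rvX^n$ is a deterministic image of $(\rvM,\rvR^n)$, so $\H{\rvX^n|\rvM}\leq\H{\rvR^n}=n\avgH{\rvR}$; combined with the Markov relation $\rvM-\rvX^n-\rvZ^n$ and small leakage this gives $\avgI{\rvX^n;\rvZ^n}\leq n\avgH{\rvR}+o(n)$, which I would single-letterize to $\avgI{\rvX;\rvZ|\rvU}\leq\avgH{\rvR}+o(1)$, placing the limiting distribution in $\calP$. Cardinality bounds on $\rvU,\rvV$ and continuity then close the argument.

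The step I expect to be the main obstacle is the converse's randomness constraint, specifically the matching of auxiliaries. The rate bound forces the wiretap auxiliary $\rvU_i=(\rvY^{i-1},\rvZ_{i+1}^n)$, whereas the block inequality $\avgI{\rvX^n;\rvZ^n}\leq n\avgH{\rvR}+o(n)$ single-letterizes most naturally with the conditioning $\rvZ_{i+1}^n$ alone, since $\avgI{\rvX^n;\rvZ^n}\geq\sum_i\avgI{\rvX_i;\rvZ_i|\rvZ_{i+1}^n}$ follows from the chain rule but the extra side information $\rvY^{i-1}$ in $\rvU_i$ is not obviously benign for an \emph{upper} bound on $\avgI{\rvX_i;\rvZ_i|\rvU_i}$. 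Merely bounding the conditional mutual information by $\H{\rvX_i|\rvU_i}$ is too lossy, because $\sum_i\H{\rvX_i|\rvU_i}$ need not be controlled by $\H{\rvX^n|\rvM}$. Resolving this requires either exhibiting a single auxiliary that simultaneously produces the rate bound and the randomness bound, or introducing an explicit randomization message $\rvM'$ (the residual index Bob decodes beyond $\rvM$), arguing $\H{\rvM'|\rvM}\leq\H{\rvR^n}$ and single-letterizing $\H{\rvM'}$ to $\avgI{\rvX;\rvZ|\rvU}$; reconciling that construction with the wiretap auxiliary is the delicate point.
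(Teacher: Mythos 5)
Your overall architecture coincides with the paper's: achievability via superposition coding over $\rvU$ plus a resolvability-based randomization layer at rate $R_r>\avgI{\rvX;\rvZ|\rvU}$, fed by near-uniform bits extracted from $\rvR^n$ at any rate below $\avgH{\rvR}$ (the paper uses an Ahlswede--Csisz\'ar intrinsic-randomness extractor and a Cuff-style cloud-mixing lemma for strong secrecy); converse via the Csisz\'ar--K\"orner single-letterization combined with $\avgH{\rvX^n|\rvM}\leq\avgH{\rvR^n}$ and the vanishing leakage to control $\avgI{\rvX^n;\rvZ^n}$. The one step you leave unresolved, and explicitly flag as the main obstacle --- single-letterizing the block randomness bound with the \emph{same} auxiliary $\rvU_i=\rvY^{i-1}\tilde{\rvZ}^{i+1}$ that the rate bound forces --- is a genuine gap in your write-up, but it closes in one line and requires neither a new auxiliary nor an explicit randomization message $\rvM'$.

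The point is that you do not need an upper bound on each $\avgI{\rvX_i;\rvZ_i|\rvY^{i-1}\tilde{\rvZ}^{i+1}}$; you only need the block inequality
\begin{align*}
\avgI{\rvX^n;\rvZ^n}\;\geq\;\sum_{i=1}^n\avgI{\rvX_i;\rvZ_i|\rvY^{i-1}\tilde{\rvZ}^{i+1}},
\end{align*}
and the extra conditioning on $\rvY^{i-1}$ is benign once the mutual information is split into entropies. Writing $\avgI{\rvX^n;\rvZ^n}=\sum_{i=1}^n\bigl(\avgH{\rvZ_i|\tilde{\rvZ}^{i+1}}-\avgH{\rvZ_i|\rvX^n\tilde{\rvZ}^{i+1}}\bigr)$, the first entropy only decreases when further conditioned on $\rvY^{i-1}$, while the subtracted entropy satisfies $\avgH{\rvZ_i|\rvX^n\tilde{\rvZ}^{i+1}}=\avgH{\rvZ_i|\rvX_i}=\avgH{\rvZ_i|\rvY^{i-1}\tilde{\rvZ}^{i+1}\rvX_i}$ because the channel is memoryless, so that $\rvY^{i-1}\tilde{\rvZ}^{i+1}\rightarrow\rvX_i\rightarrow\rvZ_i$ is a Markov chain and conditioning beyond $\rvX_i$ changes nothing. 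Hence the single auxiliary $\rvU_i=\rvY^{i-1}\tilde{\rvZ}^{i+1}$ (with the uniform time-sharing index $\rvQ$) simultaneously yields $R\leq\avgI{\rvV;\rvY|\rvU}-\avgI{\rvV;\rvZ|\rvU}+\delta(\epsilon)$ and $\avgH{\rvR}\geq\avgI{\rvX;\rvZ|\rvU}-\delta(\epsilon)$, which is exactly the first of the two resolutions you conjectured. With this inserted, your converse closes and the remainder of your argument matches the paper's proof.
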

\begin{proof}
  See Appendix~\ref{sec:conv-proof} and Appendix~\ref{sec:achievability}.
\end{proof}

\begin{remark}
Using standard techniques, one can show that the cardinality of $\calU$ is bounded by $\card{\calU}\leq 2$.
\end{remark} 
The expression in Proposition~\ref{prop:randomness_limited_secrecy_capacity} is similar to that obtained in~\cite[Corollary 2]{Csiszar1978}. The effect of the local source of randomness explicitly appears in the expression through the auxiliary time-sharing random variable $\rvU$ and the constraint $\avgI{\rvX;\rvZ|\rvU}\leq \avgH{\rvR}$. Proposition~\ref{prop:randomness_limited_secrecy_capacity} confirms the optimal structure of the encoder, which performs two distinct operations:
\begin{enumerate}
\item \emph{Uniformization:} the encoder generates nearly-uniform random numbers $\rvU_r$ at rate $\avgH{\rvR}$ from the local source of randomness;
\item \emph{Randomization:} the encoder uses a fraction $\avgI{\rvX;\rvZ|\rvU}$ of the randomness rate to randomize the choice of a codeword;
\end{enumerate}
The identification of the optimal encoder structure suggest that non-uniform randomization may affect the performance of a code, which we discuss in Section~\ref{sec:secure-comm-with}. Proposition~\ref{prop:randomness_limited_secrecy_capacity} also highlights that the common folklore in information-theoretic security, according to which secrecy is achievable provided the randomization can exhaust the capacity of the Eve's channel, is somewhat misleading. If the source provides a non-zero rate of randomness ($\avgH{\rvR}>0$), then the secrecy capacity with a rate-limited source of randomness is positive if and only if the secrecy capacity with unlimited randomness is positive. Intuitively, this happens because the channel seen by Eve is an ``effective channel'', which is partly controlled by Alice through time-sharing and the choice of the codebook.

Also note that if the rate of randomness vanishes, then no secure communication is possible. This confirms that, except for pathological channels (for instance, one for which $\avgI{\rvX;\rvZ}=0$ for any $\rvX$), one cannot replace the local source of randomness by a pseudo-random number generator without losing the information-theoretic secrecy guarantees.

\section{Rate-Limited Randomness: Practical Considerations}
\label{sec:discussion}

It is legitimate to wonder how the results of previous sections generalize to continuous channels and, in particular, to Gaussian channels. There are no conceptual difficulties in analyzing the randomization part of the encoder since $\avgI{\rvX;\rvZ|\rvU}$ remains finite with a power constraint; however, the simulation of Gaussian noise plays a key role in multi-user wiretap channels~\cite{Tekin2008} as a means to perform cooperative jamming.

\begin{figure}[t]
  \centering
    \scalebox{0.5}{\input{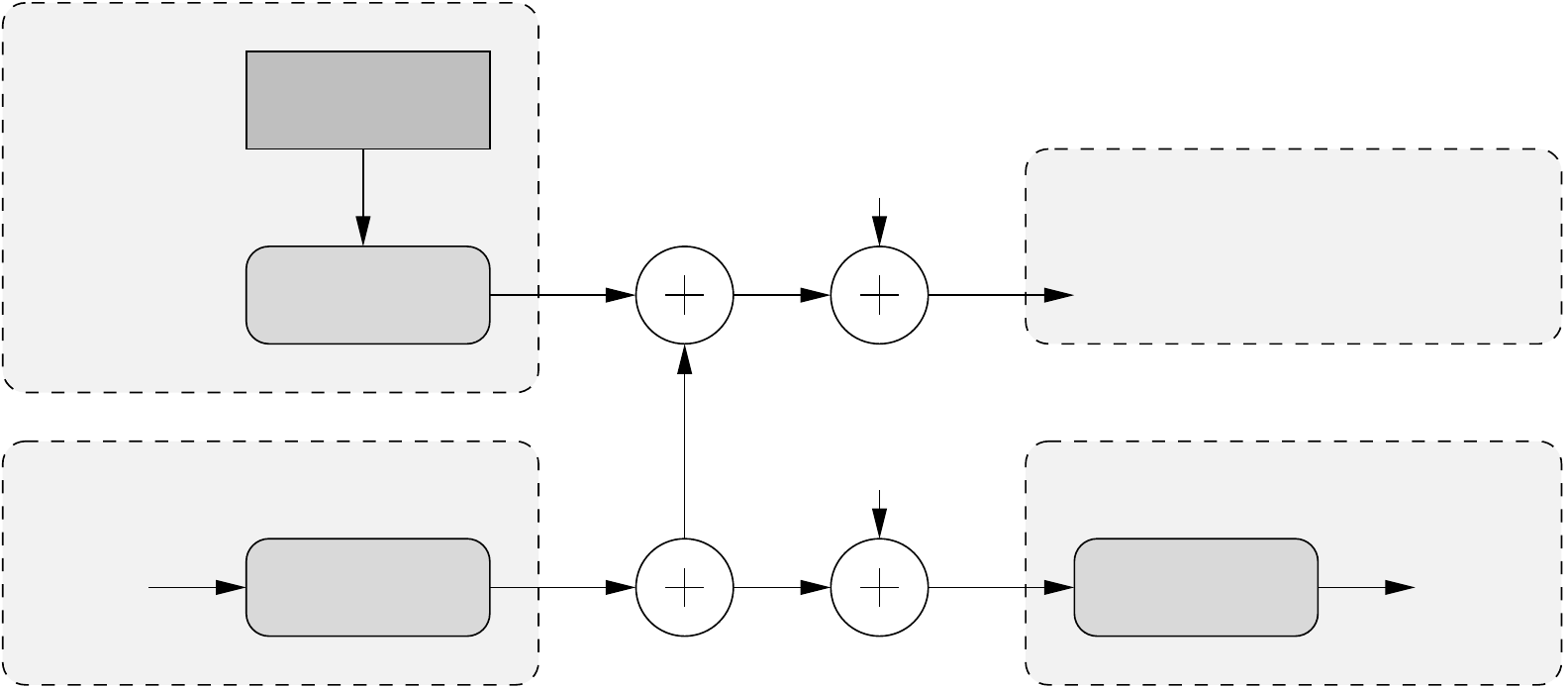_t}}
  \caption{Cooperative jamming with rate-limited randomness.}
  \label{fig:coop_jam}
\end{figure}
We analyze the situation illustrated in Figure~\ref{fig:coop_jam}, in which an eavesdropper observes the output of an AWGN channel with noise variance $\sigma^2$ and suffers from the added interference of a cooperative jammer (Adam). The signal obtained by the eavesdropper is then
\begin{align*}
  \rvZ^n=\rvX^n+\rvC^n+\rvN_e^n,
\end{align*}
where $\rvX^n$ is the codeword transmitted by Alice, $\rvC^n$ is the interference introduced by Adam, and $\rvN_e^n$ is the channel noise. Cooperative jamming would consist in generating $\rvC^n$ i.i.d. according to a Gaussian distribution. With a local source of randomness, the following results holds.
\begin{proposition}
  With a local source of randomness $(\calR,p_{\rvR})$, a cooperative jammer can induce artificial Gaussian noise with power $\rho\leq \sigma^22^{2\avgH{\rvR}-1}$. 
\end{proposition}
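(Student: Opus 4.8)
The plan is to recognize the jammer's objective as a channel-resolvability (soft-covering) problem and to reuse the two-stage encoder structure exhibited in Proposition~\ref{prop:randomness_limited_secrecy_capacity}. Genuine i.i.d.\ Gaussian jamming of power $\rho$ turns the effective eavesdropper channel into an AWGN channel of noise variance $\rho+\sigma^2$; equivalently, it makes the conditional law $p_{\rvZ^n|\rvX^n}$ equal to a deterministic shift (by $\rvX^n$) of the i.i.d.\ Gaussian law $\calN(0,\rho+\sigma^2)$. Since Adam injects $\rvC^n$ without observing anything, inducing artificial noise means \emph{synthesizing} this target law at Eve using codewords $\rvC^n$ that are indexed solely by the randomness extracted from $(\calR,p_\rvR)$.

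I would first cast this as a resolvability instance: the ``channel'' is addition of the physical noise $\rvN_e\sim\calN(0,\sigma^2)$, the target input law is $\rvC\sim\calN(0,\rho)$, and the induced output law is $\calN(0,\rho+\sigma^2)$. Mirroring Proposition~\ref{prop:randomness_limited_secrecy_capacity}, Adam first \emph{uniformizes} the source to obtain nearly-uniform bits at rate $\avgH{\rvR}$, then uses them to pick a codeword from a resolvability codebook of $2^{nR}$ Gaussian codewords. The soft-covering lemma guarantees that, for every rate $R$ strictly above $\avgI{\rvC;\rvC+\rvN_e}$, some codebook makes the synthesized output law converge (in total variation, and hence in the relative entropy relevant to Eve) to $\calN(0,\rho+\sigma^2)$. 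Because the indexing rate cannot exceed the extractable randomness, the jamming is indistinguishable from genuine Gaussian interference whenever
\begin{align*}
  \avgH{\rvR}\geq\avgI{\rvC;\rvC+\rvN_e}.
\end{align*}

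It then remains to evaluate the resolvability rate. For the scalar Gaussian setting, with $\rvC\sim\calN(0,\rho)$ independent of $\rvN_e\sim\calN(0,\sigma^2)$, one has $\avgI{\rvC;\rvC+\rvN_e}=\tfrac12\log\!\left(1+\rho/\sigma^2\right)$. Inserting this into the randomness constraint and solving for $\rho$ yields $\rho\leq\sigma^2\!\left(2^{2\avgH{\rvR}}-1\right)$, which is the announced bound (and which reduces to zero power when $\avgH{\rvR}=0$, as it should).

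The main obstacle I anticipate is handling the continuous alphabet rigorously: the soft-covering lemma is classically proved for finite alphabets, so I would either invoke its continuous-alphabet form or pass through a quantization of $\rvC$ and $\rvZ$, verifying that the average-power constraint is respected and that the total-variation bound survives as the quantization is refined. A secondary, more definitional point is to make ``induce artificial noise'' operationally precise---namely, to ensure that $p_{\rvZ^n|\rvX^n}$ approaches $\calN(\rvX^n,\rho+\sigma^2)$ uniformly enough in $\rvX^n$ that Eve's equivocation, and thus any secrecy claim layered on top of the jamming, matches that of true Gaussian interference of power $\rho$.
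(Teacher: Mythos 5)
Your proposal follows essentially the same route as the paper's own sketch: cast the jamming as a Gaussian channel resolvability (soft-covering) problem, observe that the required codebook rate is $\tfrac{1}{2}\log\left(1+\rho/\sigma^2\right)$, and bound that rate by the randomness rate $\avgH{\rvR}$ of the local source. Your resulting bound $\rho\leq\sigma^2\left(2^{2\avgH{\rvR}}-1\right)$ is the correct consequence of this argument; the exponent as typeset in the proposition is evidently a typographical slip for this expression, as your sanity check at $\avgH{\rvR}=0$ confirms.
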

\begin{proof}[Sketch of proof]
  The result follows by remarking that the objective of cooperative jamming is to increase the variance of Gaussian noise at the eavesdropper's terminal; therefore, the distribution of $\rvC^n+\rvN_e^n$ should be close to Gaussian, but $\rvC^n$ itself need not be Gaussian. In particular, the sequences $\rvC^n$ can be chosen from a codebook with average power constraint $\rho$; the result of channel resolvability over Gaussian channels~\cite{Han1994} guarantees there exists a codebook with rate arbitrarily close to $\frac{1}{2}\log(1+\frac{\rho}{\sigma^2})$ so that the distribution of $\rvC^n+\rvN_e^n$ is arbitrarily close to $\calN(\sigma^2+\rho)$. Since the rate of the codebook is given by the rate of the source $\avgH{\rvR}$, the result follows.
\end{proof}
Consequently, rate-limited randomness effectively translates into a power constraint on the Gaussian artificial noise that the encoder introduces to jam the eavesdropper. Therefore, rate-limited randomness reduces the effectiveness of cooperating jamming but does not preclude it.

\section{Non-Uniform Rate-Limited Randomness}
\label{sec:secure-comm-with}
The result of Proposition~\ref{prop:randomness_limited_secrecy_capacity} suggests that one should always ``uniformize'' the local source of randomness to create uniformly distributed random numbers. This operation, however, may be imperfect and one may wonder whether achieving secrecy is then still possible. A situation where the random numbers may not be perfectly uniform is if the local source of randomness is another message source; understanding this setting is crucial to assess whether secrecy constraints incur an overall rate loss or not~\cite{Bloch2011}.

For simplicity, we assume that the output of uniformization is a random variable $\rvU_r\in\intseq{1}{2^{nR_r}}$ with perhaps non-uniform distribution $p_{\rvU_r}$. In this case, we show that secrecy is still achievable, but at a lower rate limited by the R\'enyi entropy rate of order two $\frac{1}{n}R_2(\rvU_r)$ where
\begin{align*}
  R_2(\rvU_r) \eqdef - \log\left(\sum_{u\in\intseq{1}{2^{nR_r}}}p_{\rvU_r}(u)^2\right).
\end{align*}

\begin{proposition}
  \label{prop:non_uniform}
  A secrecy rate $R$ is achievable when randomization is performed with randomness $\rvU_r$ if it satisfies
  \begin{align*}
    R<\max_{p_{\rvU\rvX\rvY\rvZ}\in\calP}\left(\avgI{\rvX;\rvY|\rvU}-\avgI{\rvX;\rvZ|\rvU}\right),
  \end{align*}
  where $\calP$ is the set of distributions $\avgI{\rvX;\rvY|\rvU}$ that factorize as $p_{\rvU}p_{\rvX|\rvU}W_{\rvY\rvZ|\rvX}$ and such that $\avgI{\rvX;\rvZ|\rvU}<\frac{1}{n}R_2(\rvU_r)$.
\end{proposition}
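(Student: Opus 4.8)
The plan is to prove achievability with a random-coding argument that marries \emph{privacy amplification} (to cope with the non-uniformity of $\rvU_r$) with \emph{channel resolvability} (to guarantee secrecy). Fix a distribution $p_{\rvU}p_{\rvX|\rvU}W_{\rvY\rvZ|\rvX}\in\calP$, so that $\avgI{\rvX;\rvZ|\rvU}<\frac1n R_2(\rvU_r)$, and pick an auxiliary rate $R'$ with $\avgI{\rvX;\rvZ|\rvU}<R'<\frac1n R_2(\rvU_r)$, which the constraint defining $\calP$ makes possible. I would generate, for each message $m\in\intseq{1}{2^{nR}}$ and each index $j\in\intseq{1}{2^{nR'}}$, a codeword $\rvX^n(m,j)$ i.i.d.\ according to $p_{\rvX|\rvU}$ (conditioned on a fixed time-sharing sequence drawn from $p_{\rvU}$, handled by the usual superposition construction). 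The new ingredient is a map $g$ drawn from a two-universal family that hashes the $2^{nR_r}$ outcomes of $\rvU_r$ down to the $2^{nR'}$ subcodebook indices; to transmit $m$ the encoder draws $u_r\sim p_{\rvU_r}$ and sends $\rvX^n(m,g(u_r))$. Thus $\rvU_r$ is the \emph{only} randomness used, and the hash compresses its support so that Bob's decoding burden scales with $R'\approx\avgI{\rvX;\rvZ|\rvU}$ rather than with $R_r$.

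I would then carry out three separate analyses for the ensemble averaged over codebook and hash. Reliability: Bob runs joint-typicality decoding over the $2^{n(R+R')}$ codewords, and a standard union bound shows the error probability vanishes provided $R+R'<\avgI{\rvX;\rvY|\rvU}$; since $R'$ can be taken arbitrarily close to $\avgI{\rvX;\rvZ|\rvU}$, this yields the claimed rate $R<\avgI{\rvX;\rvY|\rvU}-\avgI{\rvX;\rvZ|\rvU}$. Secrecy: writing $q(j)\eqdef\P{g(\rvU_r)=j}$ for the distribution induced on the subcodebook indices, the output seen by Eve conditioned on $m$ is the mixture $\sum_j q(j)W^n_{\rvZ|\rvX}(\cdot|\rvX^n(m,j))$, and I would bound its relative entropy to the i.i.d.\ output $p_{\rvZ|\rvU}^{\otimes n}$ via a soft-covering estimate. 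The second-moment term governing this estimate is proportional to the collision probability $\sum_j q(j)^2$, and two-universality gives $\mathbb{E}_g[\sum_j q(j)^2]\leq 2^{-nR'}+2^{-R_2(\rvU_r)}$; multiplying by the usual resolvability factor $2^{n\avgI{\rvX;\rvZ|\rvU}}$ produces $2^{-n(R'-\avgI{\rvX;\rvZ|\rvU})}+2^{-(R_2(\rvU_r)-n\avgI{\rvX;\rvZ|\rvU})}$, which vanishes exactly because $\avgI{\rvX;\rvZ|\rvU}<R'$ and $\avgI{\rvX;\rvZ|\rvU}<\frac1n R_2(\rvU_r)$. This drives $\E{\avgI{\rvM;\rvZ^n}}\to0$, i.e.\ strong secrecy. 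Finally, since the expected probability of error and the expected leakage both vanish, a selection (expurgation) argument extracts a single codebook and hash meeting both criteria.

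The step I expect to be the main obstacle is the secrecy analysis with the \emph{non-uniform, hashed} selection: the resolvability lemmas in the literature assume uniform codeword selection, so I must re-derive the second-moment bound while carrying the selector $q$ through, and it is precisely here that the collision/R\'enyi-entropy-of-order-two quantity $R_2(\rvU_r)$ appears rather than the Shannon entropy $\avgH{\rvR}$ of Proposition~\ref{prop:randomness_limited_secrecy_capacity}. Care is also needed because two independent random objects (the codebook and the hash) enter the expectation simultaneously, so the variance computation must be organized so that the codeword randomness yields the $2^{n\avgI{\rvX;\rvZ|\rvU}}$ factor while the hash randomness yields the $2^{-R_2(\rvU_r)}$ factor; only their product controls the leakage. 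The remaining ingredients---the superposition construction realizing the auxiliary $\rvU$ and the passage from the memoryless abstraction to the fixed $p_{\rvU_r}$ of interest---are routine once this estimate is in place.
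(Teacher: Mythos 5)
Your proposal is correct, and its analytical core coincides with the paper's: the decisive step in both arguments is a soft-covering (channel-resolvability) second-moment bound in which the codeword-selection distribution is \emph{not} uniform, so that the variance of the synthesized output at Eve picks up the collision probability of the selector---$\sum_k p_{\rvU_r}(k)^2 = 2^{-R_2(\rvU_r)}$---in place of the usual $2^{-nR_r}$. This is exactly Lemma~\ref{lm:resolvability_renyi} in Appendix~\ref{sec:proof-proposition-non-uniform}, and it is where the order-two R\'enyi entropy enters in both treatments; the remaining steps (Markov/expurgation selection, conversion from variational distance or divergence to mutual information) are shared. Where you genuinely depart from the paper is the two-universal hash $g$: the paper indexes a subcodebook of size $2^{nR_r}$ \emph{directly} by $\rvU_r$, letting $p_{\rvU_r}$ itself play the role of your $q$, whereas you first compress $\intseq{1}{2^{nR_r}}$ to $\intseq{1}{2^{nR'}}$ with $R'$ just above $\avgI{\rvX;\rvZ|\rvU}$ and bound $\E[g]{\sum_j q(j)^2}\leq 2^{-nR'}+2^{-R_2(\rvU_r)}$. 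Your extra layer buys something real: without it Bob must disambiguate $2^{n(R+R_r)}$ codewords, so the reliability constraint reads $R+R_r<\avgI{\rvX;\rvY|\rvU}$ and falls short of the claimed rate whenever the alphabet rate $R_r$ of $\rvU_r$ exceeds $\avgI{\rvX;\rvZ|\rvU}$ by a nonvanishing amount; the paper implicitly avoids this by treating $R_r$ as a design parameter of the (imperfect) uniformizer, chosen close to $\avgI{\rvX;\rvZ|\rvU}$, a reading your construction does not require. The one caveat is conceptual rather than mathematical: Proposition~\ref{prop:non_uniform} is motivated as quantifying what is lost when the encoder \emph{cannot} process $\rvU_r$, and a code-dependent two-universal hash is a nontrivial processing step (essentially an extractor without the final uniformity guarantee), so your scheme sits between the settings of Proposition~\ref{prop:randomness_limited_secrecy_capacity} and Proposition~\ref{prop:non_uniform}. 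As a pure achievability claim for the stated rate region, however, your argument is valid, and the condition $\avgI{\rvX;\rvZ|\rvU}<\frac{1}{n}R_2(\rvU_r)$ emerges for the same reason in both proofs.
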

\begin{proof}
  See Appendix~\ref{sec:proof-proposition-non-uniform}.
\end{proof}
It is not straightforward to establish a converse for Proposition~\ref{prop:non_uniform} because typical converse arguments make no assumption regarding the internal structure of the encoder. In particular, it seems difficult to include a constraint that would prevent any processing of $\rvU_r$. 

In general, $\frac{1}{n}R_2(\rvU_r)\leq \frac{1}{n}\avgH{\rvU_r}$, and the constraint in Proposition~\ref{prop:non_uniform} is therefore more stringent than in Proposition~\ref{prop:randomness_limited_secrecy_capacity}. The effect can be quite dramatic, and the following example shows that the gap between the rates in Proposition~\ref{prop:randomness_limited_secrecy_capacity} and Proposition~\ref{prop:non_uniform} can be large.
\begin{example}
  Assume the encoder performs randomization with a biased local source of randomness, which produces random numbers $\rvU_r\in\intseq{1}{2^{nR}}$ such that
  \begin{align*}
    \P{\rvU_r=1}=2^{-n\alpha R}\text{ and }\P{\rvU_r=i} = \frac{1-2^{-n\alpha R}}{2^{nR}-1}\text{ if }i\neq 1,
  \end{align*}
  where $\alpha\in]0;\frac{1}{2}[$ is a parameter that controls the uniformity of the distribution. Note that
  \begin{align*}
    \lim_{n\rightarrow\infty}\tfrac{1}{n}R_2(\rvU_r) = \alpha R\quad\text{whereas}\quad \lim_{n\rightarrow\infty}\tfrac{1}{n}\avgH{\rvU_r} = R.
  \end{align*}
  Consequently, without proper uniformization, the achievable rates predicted in Proposition~\ref{prop:non_uniform} could be arbitrarily small.
\end{example}

\section{Acknowledgements}
\label{sec:acknowledgements}
This work was supported in part by the U.S. National Science Foundation under grants CCF-0830666 and CCF-1017632. 

\appendices

\section{Converse Proof for Proposition~\ref{prop:randomness_limited_secrecy_capacity}}
\label{sec:conv-proof}
Let $\epsilon>0$ and let $R$ be an achievable rate. Then, there exists a $(2^{nR},n)$ code $\calC_n$ such that $P_e(\calC_n)\leq \epsilon$ and $L(\calC_n)\leq \epsilon$. Following the converse technique in~\cite{Csiszar1978}, we obtain 
\begin{align*}
  R\leq\tfrac{1}{n}\sum_{i=1}^n\left(\avgI{\rvM;\rvY_i|\rvY^{i-1}\tilde{\rvZ}^{i+1}}\!\!-\!\avgI{\rvM;\rvZ_i|\rvY^{i-1}\tilde{\rvZ}^{i+1}}\right)\!+\!\delta(\epsilon),
\end{align*}
where $\tilde{\rvY}^{i-1}\eqdef\{\rvY_j\}_{j=1}^{i-1}$, $\tilde{\rvZ}^{i+1}\eqdef\{\rvZ_j\}_{j=i+1}^n$ and $\delta(\epsilon)$ is a function of $\epsilon$ that goes to zero with $\epsilon$. Next, by definition of the encoder $e$ and by independence of $\rvR^n$ and $\rvM$,
\begin{align}
  \frac{1}{n}\avgH{\rvX^n|\rvM} =   \frac{1}{n}\avgH{e(\rvM,\rvR^n)|\rvM}\leq \frac{1}{n}\avgH{\rvR^n}=\avgH{\rvR}.\label{eq:bound_Raux_1}
\end{align}

Now, we also have
\begin{align}
  &\frac{1}{n}\avgH{\rvX^n|\rvM} \nonumber\\
  &=\frac{1}{n}\avgH{\rvX^n}-\frac{1}{n}\avgH{\rvM} + \frac{1}{n}\avgH{\rvM|\rvX^n}\nonumber\allowdisplaybreaks[0]\\
  &\geq \frac{1}{n}\avgH{\rvX^n} -\frac{1}{n}\avgH{\rvM} +\frac{1}{n}\avgH{\rvM|\rvX^n}+\frac{1}{n}\avgI{\rvM;\rvZ^n} - \delta(\epsilon)\nonumber\allowdisplaybreaks[0]\\
  & = \frac{1}{n}\avgH{\rvX^n} -\frac{1}{n}\avgH{\rvM|\rvZ^n} - \delta(\epsilon)+\frac{1}{n}\avgH{\rvM|\rvX^n}\nonumber\\
  &= \frac{1}{n}\avgH{\rvX^n} -\frac{1}{n}\avgH{\rvM\rvX^n|\rvZ^n} +\frac{1}{n}\avgH{\rvX^n|\rvM\rvZ^n}\nonumber\\
  &\phantom{----------------}+\frac{1}{n}\avgH{\rvM|\rvX^n}- \delta(\epsilon)\nonumber\\
  &=\frac{1}{n}\avgI{\rvX^n;\rvZ^n} + \frac{1}{n}\avgH{\rvX^n|\rvM\rvZ^n}-\delta(\epsilon)\nonumber\\
  &\geq\frac{1}{n}\avgI{\rvX^n;\rvZ^n}-\delta(\epsilon),\label{eq:bound_Raux_2}
\end{align}
where the last inequality follows because $\rvM\rightarrow\rvX^n\rightarrow\rvZ^n$ forms a Markov chain and $\avgH{\rvM|\rvX^n\rvZ^n}=\avgH{\rvM|\rvX^n}$. Then,
\begin{align}
 &\frac{1}{n}\avgI{\rvX^n;\rvZ^n}\nonumber\\
 &\phantom{--}= \frac{1}{n}\sum_{i=1}^n\left(\avgH{\rvZ_i|\tilde{\rvZ}^{i+1}}-\avgH{\rvZ_i|\rvX^n\tilde{\rvZ}^{i+1}}\right)\nonumber\\
 &\phantom{--}\geq \frac{1}{n}\sum_{i=1}^n\left(\avgH{\rvZ_i|\rvY^{i-1}\tilde{\rvZ}^{i+1}}-\avgH{\rvZ_i|\rvY^{i-1}\tilde{\rvZ}^{i+1}\rvX_i}\right)\nonumber\\
&\phantom{--}=\frac{1}{n}\sum_{i=1}^n\avgI{\rvX_i;\rvZ_i|\rvY^{i-1}\tilde{\rvZ}^{i+1}},\label{eq:bound_Raux_3}
\end{align}
where the inequality follows because conditioning does not increase entropy and $\tilde{\rvZ}^{i+1}\rvY^{i-1}\rightarrow\rvX_i\rightarrow\rvZ_i$ forms a Markov chain. Let us now define a random variable $\rvQ$ independent of all others and uniformly distributed on $\intseq{1}{n}$. For $i\in\intseq{1}{n}$, we also define $\rvU_i\eqdef \rvY^{i-1}\tilde\rvZ^{i+1}$ and $\rvV_i\eqdef \rvU_i\rvM$. Combining inequalities~\eqref{eq:bound_Raux_1},~\eqref{eq:bound_Raux_2}, and~\eqref{eq:bound_Raux_3}, and substituting the definition of $\rvQ$, $\rvU_i$, $\rvV_i$ above, we obtain 
\begin{align}
  R&\leq \avgI{\rvV_\rvQ;\rvY_\rvQ|\rvQ\rvU_\rvQ}-\avgI{\rvV_\rvQ;\rvZ_\rvQ|\rvQ\rvU_\rvQ}+\delta(\epsilon)\label{eq:single_lett_1}\\
  \avgH{\rvR} &\geq \avgI{\rvX_\rvQ;\rvZ_\rvQ|\rvQ\rvU_\rvQ}-\delta(\epsilon)\label{eq:single_lett_2}.
\end{align}
Finally, define $\rvU\eqdef \rvU_\rvQ\rvQ$, $\rvV\eqdef \rvV_\rvQ\rvQ$, $\rvX\eqdef \rvX_\rvQ$, $\rvY\eqdef \rvY_\rvQ$ and $\rvZ\eqdef \rvZ_\rvQ$. Note that $\rvU\rightarrow\rvV\rightarrow\rvX\rightarrow\rvY\rvZ$ forms a Markov chain and that the statistics $p_{\rvY\rvZ|\rvX}$ are those of the original channel $W_{\rvY\rvZ|\rvX}$. Substituting these definitions in~\eqref{eq:single_lett_1} and~\eqref{eq:single_lett_2}, we obtain
\begin{align*}
  R&\leq \avgI{\rvV;\rvY|\rvU}-\avgI{\rvV;\rvZ|\rvU}+\delta(\epsilon)\\
  \avgH{\rvR}&\geq \avgI{\rvX;\rvZ|\rvU} -\delta(\epsilon).
\end{align*}
Because the eavesdropper's channel is less capable, then $\avgI{\rvV;\rvY|\rvU}-\avgI{\rvV;\rvZ|\rvU} \leq \avgI{\rvX;\rvY|\rvU}-\avgI{\rvX;\rvZ|\rvU}$. Since $\epsilon$ can be chosen arbitrarily small, we obtained the desired converse.

\section{Achievability Proof for Proposition~\ref{prop:randomness_limited_secrecy_capacity}}
\label{sec:achievability}
The proof relies on binning, superposition coding, and stochastic encoding as in~\cite[Lemma 2]{Csiszar1978}; however, since the local source of randomness is explicit and since we impose a strong secrecy criterion, some details must be laid out carefully. We denote the set of $\epsilon$-strongly typical sequences with respect to $p_\rvX$ by  $\sts{\epsilon}{n}{\rvX}$ and the set of conditional $\epsilon$-strongly typical sequence with respect to $p_{\rvY\rvX}$ and $x^n\in\sts{\epsilon}{n}{\rvX}$ by $\sts{\epsilon}{n}{\rvY|x^n}$. 

We first show the existence of a code $\calC_n$ assuming an unlimited amount of uniform randomness is available. We fix a joint distribution $p_{\rvU\rvX}$ on $\calU\times\calX$ such that\footnote{If such a probability distribution does not exist, then the result of Proposition~\ref{prop:randomness_limited_secrecy_capacity} is trivial and there is nothing to prove.} $\avgI{\rvX;\rvZ|\rvU}\leq\avgH{\rvR}$ and $\avgI{\rvX;\rvY|\rvU}-\avgI{\rvX;\rvZ|\rvU}>0$,  and we construct a code $\calC_n$ for the broadcast channel with confidential messages $(\calX,p_{\rvY\rvZ|\rvX},\calY\times\calZ)$. Let $\epsilon>0$, $R>0$, $R_r>0$, $R_0>0$ and $n\in\mathbb{N}$. We randomly construct a code as follows. We generate $2^{nR_0}$ sequences independently at random according to $p_\rvU$, which we label $u^n(i)$ for $i\in\intseq{1}{2^{nR_0}}$. For each sequence $u^n(i)$, we generate $2^{n(R+R_r)}$ sequences independently a random according to $p_{\rvX|\rvU}$, which we label $x^{n}(i,j,k)$ with $j\in\intseq{1}{2^{nR}}$ and $k\in\intseq{1}{2^{nR_r}}$. To transmit a message $i\in\intseq{1}{2^{nR_0}}$ and $j\in\intseq{1}{2^{nR}}$, the transmitter obtains a realization $k$ of a uniform random number $\rvU_{r}\in\intseq{1}{2^{nR_r}}$, and transmits $x^n(i,j,k)$ over the channel. Upon receiving $y^n$, Bob decodes $i$ as the received index if it is the unique one such that $(u^n(i),y^n)\in\sts{\epsilon}{n}{UY}$; otherwise he declares an error. Bob then decode $(j,k)$ as the other pair of indices if it is the unique one such that $(x^n(i,j,k),y^n)\in\sts{\epsilon}{n}{UXY}$. Similarly, upon receiving $z^n$, Eve decodes $i$ as the received index if it is the unique one such that $(u^n(i),z^n)\in\sts{\epsilon}{n}{UZ}$; otherwise she declares an error.
\begin{lemma}
  \label{lm:reliability}
  If $R_0<\min(\avgI{\rvU;\rvY},\avgI{\rvU;\rvZ})$ and $R+R_r<\avgI{\rvX;\rvY|\rvU}$, then $\E{P_e(\rvC_n)}\leq 2^{-\alpha n}$ for some $\alpha>0$.
\end{lemma}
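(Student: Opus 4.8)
The plan is to bound $\E{P_e(\rvC_n)}$ by decomposing the decoding failure into three events and running a standard random-coding argument on each, but using the strongly typical sets already fixed in the construction so as to obtain an exponential bound rather than a merely vanishing one. First I would invoke the symmetry of the ensemble: since the $u^n(i)$ and $x^n(i,j,k)$ are generated i.i.d. and the transmitted indices are uniform, $\E{P_e(\rvC_n)}$ equals the error probability conditioned on the transmitted triple being $(i,j,k)=(1,1,1)$. I would then write the error event as the union of $\mathcal{E}_Y^{(i)}$ (Bob fails to identify the cloud index from $y^n$), $\mathcal{E}_Z^{(i)}$ (Eve fails to identify it from $z^n$), and $\mathcal{E}_Y^{(j,k)}$ (Bob, having recovered the correct cloud index, fails to identify $(j,k)$), and bound $\E{P_e(\rvC_n)}$ by the sum of the three probabilities.

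For the cloud-index decoding at Bob, I would split $\mathcal{E}_Y^{(i)}$ into the event that the true pair $(u^n(1),y^n)$ is not in $\sts{\epsilon}{n}{\rvU\rvY}$ and the event that some wrong center $u^n(i')$, $i'\neq 1$, falls in $\sts{\epsilon}{n}{\rvU\rvY}$ with $y^n$. The first probability is exponentially small by a Chernoff/large-deviation estimate on strongly typical sequences. For the second, because $u^n(i')$ is independent of $y^n$ and drawn from $p_\rvU$, the standard strong-typicality bound gives a per-codeword probability at most $2^{-n(\avgI{\rvU;\rvY}-\delta(\epsilon))}$ with $\delta(\epsilon)\to 0$; a union bound over the $2^{nR_0}$ wrong centers yields $2^{-n(\avgI{\rvU;\rvY}-\delta(\epsilon)-R_0)}$, which decays exponentially whenever $R_0<\avgI{\rvU;\rvY}$. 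The analysis of $\mathcal{E}_Z^{(i)}$ is identical with $\rvZ$ in place of $\rvY$, giving exponential decay whenever $R_0<\avgI{\rvU;\rvZ}$.

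For the satellite decoding $\mathcal{E}_Y^{(j,k)}$ I would condition on the correct center $u^n(1)$ and on its having been decoded correctly. The failure of the true codeword $(x^n(1,1,1),y^n)$ to lie in $\sts{\epsilon}{n}{\rvU\rvX\rvY}$ is again exponentially rare by the conditional strong AEP, while each of the $2^{n(R+R_r)}-1$ competing codewords, drawn i.i.d. from $p_{\rvX|\rvU}$ given $u^n(1)$ and hence independent of $y^n$ given that center, is jointly typical with probability at most $2^{-n(\avgI{\rvX;\rvY|\rvU}-\delta(\epsilon))}$. The union bound then gives $2^{-n(\avgI{\rvX;\rvY|\rvU}-\delta(\epsilon)-(R+R_r))}$, which is exponentially small when $R+R_r<\avgI{\rvX;\rvY|\rvU}$. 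Collecting the three contributions and choosing $\epsilon$ small enough that $\delta(\epsilon)$ lies below the slack in each of the three strict inequalities, the total bound is at most $2^{-\alpha n}$ with $\alpha$ the minimum of the three positive exponents.

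I expect the main obstacle to be upgrading the classical weak-typicality argument, which only yields a vanishing \emph{average} error, into an \emph{exponentially} decaying bound; this is precisely what forces the use of strongly typical sets together with large-deviation estimates for the ``transmitted codeword atypical'' terms. A secondary subtlety is the superposition structure: the competing satellite codewords must be handled under the conditional law $p_{\rvX|\rvU}$ given the correct center rather than the marginal, so the relevant exponent is $\avgI{\rvX;\rvY|\rvU}$ and the conditional typicality bounds must be applied with $u^n(1)$ held fixed.
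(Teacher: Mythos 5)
Your proposal is correct and is precisely the ``standard random coding argument'' that the paper invokes and omits: symmetrization to the transmitted triple $(1,1,1)$, a union of the two cloud-decoding errors (at Bob and at Eve, which is why $R_0<\min(\avgI{\rvU;\rvY},\avgI{\rvU;\rvZ})$ appears) and the satellite-decoding error, with strong-typicality packing bounds and Chernoff-type estimates for the atypicality of the true codeword yielding the exponential decay. Your two closing remarks---that strong typicality is needed to make the ``transmitted codeword atypical'' term exponential, and that the satellite competitors must be analyzed under $p_{\rvX|\rvU}$ conditioned on the correct center---are exactly the right points of care.
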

\begin{proof}
  The proof follows from a standard random coding argument and is omitted.
\end{proof}
\begin{lemma}
  \label{lm:secrecy}
  If $R_r>\avgI{\rvX;\rvZ|\rvU}$, then we have $\E[\rvC_n]{\V{p_{\rvM\rvZ^n},p_{\rvM}p_{\rvZ^n}}}\leq 2^{-\beta n}$ for some $\beta >0$, where $\mathbb{V}$ denotes the variational distance.
\end{lemma}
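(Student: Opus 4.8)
The plan is to prove this strong-secrecy estimate by a channel-resolvability (soft-covering) argument: for each message $m$ the $2^{nR_r}$ codewords indexed by the randomization variable $k$ form a sub-codebook whose image through $W_{\rvZ|\rvX}$ should ``cover'' the output statistics, so that $\rvZ^n$ becomes nearly independent of $m$. First I would use that $\rvM$ is uniform to write
\begin{align*}
  \V{p_{\rvM\rvZ^n},p_{\rvM}p_{\rvZ^n}}=\sum_{m}p_{\rvM}(m)\,\V{p_{\rvZ^n|\rvM=m},p_{\rvZ^n}},
\end{align*}
reducing the claim to an expected bound on each conditional distance. For a fixed code $p_{\rvZ^n|\rvM=m}(z^n)=2^{-n(R_0+R_r)}\sum_{i,k}W_{\rvZ|\rvX}^n(z^n|x^n(i,m,k))$, and $p_{\rvZ^n}$ is the further average over $m$. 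I would introduce the $m$-independent reference $\Gamma(z^n)\eqdef 2^{-nR_0}\sum_i\prod_{t=1}^{n}p_{\rvZ|\rvU}(z_t|u_t(i))$ with $p_{\rvZ|\rvU}\eqdef\sum_x p_{\rvX|\rvU}W_{\rvZ|\rvX}$. Writing $Q_{i,m}(z^n)\eqdef 2^{-nR_r}\sum_k W_{\rvZ|\rvX}^n(z^n|x^n(i,m,k))$ and $P_i(z^n)\eqdef\prod_t p_{\rvZ|\rvU}(z_t|u_t(i))$, joint convexity of the variational distance gives $\V{p_{\rvZ^n|\rvM=m},\Gamma}\leq 2^{-nR_0}\sum_i\V{Q_{i,m},P_i}$, and likewise $\V{p_{\rvZ^n},\Gamma}\leq 2^{-nR}\sum_m\V{p_{\rvZ^n|\rvM=m},\Gamma}$. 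By the triangle inequality it then suffices to show $\E[\rvC_n]{\V{Q_{i,m},P_i}}\leq 2^{-\gamma n}$ for some $\gamma>0$.

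The key step is the soft-covering estimate for a single cloud. Fix a cloud center $u^n$; with codewords $x^n(k)\sim\prod_t p_{\rvX|\rvU}(\cdot|u_t)$ drawn independently one has $\E{Q(z^n)}=P(z^n)$ for $Q(z^n)=2^{-nR_r}\sum_k W^n(z^n|x^n(k))$ and $P(z^n)=\prod_t p_{\rvZ|\rvU}(z_t|u_t)$. I would split $z^n$ according to conditional $\epsilon$-typicality with $u^n$: the atypical part of $\V{Q,P}$ is exponentially small because $P=\E{Q}$ concentrates on $\sts{\epsilon}{n}{\rvZ|u^n}$, while on the typical part the first-moment bound $\E{|Q(z^n)-P(z^n)|}\leq\sqrt{\Var{Q(z^n)}}$ and Cauchy--Schwarz give
\begin{align*}
  \sum_{z^n\in\sts{\epsilon}{n}{\rvZ|u^n}}\sqrt{\Var{Q(z^n)}}\leq\sqrt{\big|\sts{\epsilon}{n}{\rvZ|u^n}\big|}\,\sqrt{\textstyle\sum_{z^n}\Var{Q(z^n)}}.
\end{align*}
Using $\Var{Q(z^n)}\leq 2^{-nR_r}\E{W^n(z^n|x^n(1))^2}$ together with the typical-set bounds $\sum_{z^n}W^n(z^n|x^n)^2\lesssim 2^{-n\avgH{\rvZ|\rvX\rvU}}$ and $\big|\sts{\epsilon}{n}{\rvZ|u^n}\big|\lesssim 2^{n\avgH{\rvZ|\rvU}}$ yields
\begin{align*}
  \E{\V{Q,P}}\lesssim 2^{-\frac{n}{2}\left(R_r-\avgI{\rvX;\rvZ|\rvU}\right)},
\end{align*}
which decays precisely when $R_r>\avgI{\rvX;\rvZ|\rvU}$, the hypothesis of the lemma. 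Applying this with $u^n=u^n(i)$ for conditionally typical cloud centers (atypical $u^n(i)$ occur with exponentially small probability and contribute at most that probability, since $\V{\cdot,\cdot}\leq 1$) gives $\E[\rvC_n]{\V{Q_{i,m},P_i}}\leq 2^{-\gamma n}$; combining this with the two convexity inequalities and the triangle inequality produces $\E[\rvC_n]{\V{p_{\rvM\rvZ^n},p_{\rvM}p_{\rvZ^n}}}\leq 2^{-\beta n}$ for any $\beta<\tfrac{1}{2}(R_r-\avgI{\rvX;\rvZ|\rvU})$ and all large $n$.

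I expect the main obstacle to be this soft-covering concentration carried out directly at the level of variational distance rather than divergence, as demanded by the strong-secrecy criterion: one must control the second moment of $Q(z^n)$ uniformly over the conditionally typical $z^n$, track the $\epsilon$-slack in the typicality exponents so it can be driven below $R_r-\avgI{\rvX;\rvZ|\rvU}$, and confirm that the contributions of atypical $z^n$ and of atypical cloud centers $u^n(i)$ are absorbed into the same exponential rate. The binding constraint throughout is $R_r>\avgI{\rvX;\rvZ|\rvU}$; the cloud-layer condition $R_0<\avgI{\rvU;\rvZ}$ is used only for reliability and plays no role here, as $\Gamma$ serves purely as a message-independent reference.
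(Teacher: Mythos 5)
Your overall architecture is the same as the paper's: reduce to a per-cloud soft-covering statement, split $z^n$ by conditional typicality, and control the typical part through a second-moment (Jensen/variance) bound, arriving at the exponent $\tfrac{1}{2}\left(R_r-\avgI{\rvX;\rvZ|\rvU}\right)$. The Cauchy--Schwarz aggregation of the variances is a cosmetic variant of the paper's ``uniform bound times $\card{\sts{\epsilon}{n}{\rvZ|u^n}}$'' and the reference distribution $\Gamma$ plays the role of the paper's $q_{\rvZ^n|\rvU^n}$, so none of that is a real difference.

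There is, however, one step that fails as written: the claimed bound $\sum_{z^n}W^n(z^n|x^n)^2\lesssim 2^{-n\avgH{\rvZ|\rvX}}$ for the \emph{unrestricted} sum over $z^n$. That sum equals $\prod_t\sum_z W(z|x_t)^2$, which is $2^{-n\bar R_2}$ with $\bar R_2$ the average order-$2$ R\'enyi entropy of the channel outputs, and $\bar R_2\leq\avgH{\rvZ|\rvX}$ with strict inequality for any non-uniform output row (e.g.\ $W(\cdot|x)=(0.9,0.1)$ gives $\sum_z W(z|x)^2=0.82>2^{-\avgH{\rvZ|\rvX}}\approx 0.72$), so the true value is exponentially \emph{larger} than your bound. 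Carried through, your argument only establishes the lemma under the stronger hypothesis $R_r>\avgH{\rvZ|\rvU}-\bar R_2>\avgI{\rvX;\rvZ|\rvU}$. The missing idea is that splitting $z^n$ by typicality given $u^n$ is not enough: one must also truncate the randomization sum itself according to \emph{joint} typicality of $(x^n(1,1,k),z^n)$, which is exactly the paper's decomposition of $p_{\rvZ^n|\rvU^n=u_1^n\rvM=1}$ into $p^{(1)}$ and $p^{(2)}$. Inside $p^{(1)}$ the indicator forces $W^n(z^n|x^n)\leq 2^{-n(\avgH{\rvZ|\rvX}-\delta(\epsilon))}$, so $\E{(W^n\mathbf{1})^2}\leq 2^{-n(\avgH{\rvZ|\rvX}-\delta(\epsilon))}\,q_{\rvZ^n|\rvU^n=u_1^n}(z^n)$ and the variance acquires the correct exponent; the excised part $p^{(2)}$ cannot be handled in second moment and is instead bounded in first moment by the probability of joint atypicality, which vanishes by the AEP. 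With that repair your proof coincides with the paper's.
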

\begin{proof}
  Lemma~\ref{lm:secrecy} is a special case of Lemma~\ref{lm:resolvability_renyi} proved in Appendix~\ref{sec:proof-proposition-non-uniform}.
\end{proof}
Using Markov's inequality, we conclude that there exists at least one code $\calC_n$ satisfying the rate inequalities in Lemma~\ref{lm:reliability} and Lemma~\ref{lm:secrecy}, such that $P_e(\calC_n)\leq 3\cdot 2^{-\alpha n}$ and $\V{p_{\rvM\rvM_0\rvZ^n},p_{\rvM}p_{\rvM_0\rvZ^n}}\leq 3\cdot 2^{-\beta n}$. Finally, the uniform numbers $\rvU_{r}$ can be approximately obtained from $(\calR,p_\rvR)$ with an appropriate function $\phi$.
\begin{lemma}[adapted from~\cite{Ahlswede1998}]
  If $R_r<\avgH{\rvR}$, then there exists $\phi$ such that $\V{p_{\phi(\rvR^n)},p_{\rvU_r}}\leq 2^{-n\eta}$ for some $\eta>0$.
\end{lemma}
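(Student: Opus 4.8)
The plan is to realize $\phi$ as a \emph{balanced coloring} of the typical set, exploiting the fact that a discrete memoryless source is nearly uniform on its typical set. First I would fix $\epsilon>0$ small enough that $R_r<\avgH{\rvR}-2\epsilon$, which is possible precisely because $R_r<\avgH{\rvR}$, and set $\calT\eqdef\sts{\epsilon}{n}{\rvR}$, the set of $\epsilon$-strongly typical sequences in $\calR^n$. Two standard consequences of typicality do the heavy lifting: (i) every $r^n\in\calT$ has probability $p_{\rvR^n}(r^n)\leq q_{\max}\eqdef 2^{-n(\avgH{\rvR}-\epsilon)}$, and (ii) by a Chernoff/Hoeffding bound on the per-letter empirical counts (the exponential AEP for a memoryless source), the atypical mass is exponentially small, $\P{\rvR^n\notin\calT}\leq 2^{-\gamma n}$ for some $\gamma>0$. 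Writing $p\eqdef\P{\rvR^n\in\calT}$, we thus have $1-p\leq 2^{-\gamma n}$.

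Next I would construct $\phi$ by distributing the typical sequences among the $M\eqdef 2^{nR_r}$ output symbols as evenly as possible, assigning every atypical sequence to the symbol $1$. Concretely, process the elements of $\calT$ in any order and assign each to whichever symbol currently carries the least typical mass; since every typical atom contributes mass at most $q_{\max}$, this greedy load-balancing rule guarantees that the typical masses $t_1,\dots,t_M$ placed on the $M$ symbols differ pairwise by at most $q_{\max}$, so that $\abs{t_j-p/M}\leq q_{\max}$ for every $j$. The key quantitative point is that the target symbol mass $1/M=2^{-nR_r}$ dwarfs the largest atom, since $Mq_{\max}=2^{-n(\avgH{\rvR}-\epsilon-R_r)}$ decays exponentially by the choice of $\epsilon$.

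Finally I would bound the variational distance. The induced pmf is $p_{\phi(\rvR^n)}(1)=t_1+(1-p)$ and $p_{\phi(\rvR^n)}(j)=t_j$ for $j\geq 2$, and splitting each term through $p/M$ via the triangle inequality gives
\[
  \V{p_{\phi(\rvR^n)},p_{\rvU_r}}=\tfrac12\sum_{j=1}^{M}\abs{p_{\phi(\rvR^n)}(j)-2^{-nR_r}}\leq \tfrac12\bigl(Mq_{\max}+2(1-p)\bigr).
\]
Both terms are exponentially small, namely $Mq_{\max}\leq 2^{-n(\avgH{\rvR}-\epsilon-R_r)}$ and $1-p\leq 2^{-\gamma n}$, so choosing any $\eta$ strictly smaller than $\min(\avgH{\rvR}-\epsilon-R_r,\gamma)$ yields $\V{p_{\phi(\rvR^n)},p_{\rvU_r}}\leq 2^{-n\eta}$ for $n$ large, as claimed.

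The main obstacle is the balancing step: one must argue carefully that the greedy ``least-loaded'' assignment keeps the imbalance bounded by a single atom $q_{\max}$, and that this remains true after dumping the atypical mass into one symbol. I expect this to be the only delicate point; the typicality estimates in steps one and three are routine. An alternative route replaces the load balancing by a leftover-hash argument, but that would first require smoothing onto $\calT$ so that the near-uniformity of $\rvR^n$ on the typical set pushes the extractable rate from the R\'enyi entropy rate of order two up to the Shannon entropy rate $\avgH{\rvR}$; the balanced-coloring construction sidesteps this complication and delivers the exponential bound directly.
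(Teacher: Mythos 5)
Your argument is correct. Note, though, that the paper does not actually prove this lemma: it imports it wholesale from Ahlswede--Csisz\'ar (the ``intrinsic randomness''/uniform random number generation result), so there is no in-paper proof to compare against. Your balanced-coloring construction is a legitimate self-contained derivation and is, in substance, the standard argument behind the cited result: restrict to the strongly typical set, where every atom has mass at most $2^{-n(\avgH{\rvR}-\epsilon)}$, greedily load-balance those atoms over the $2^{nR_r}$ output symbols so that the per-symbol masses differ by at most one atom, and absorb the exponentially small atypical mass into a single symbol. The greedy invariant you flag as the delicate point does hold (adding an item of weight $w\leq q_{\max}$ to the least-loaded bin keeps $\max_j t_j-\min_j t_j\leq q_{\max}$, by the usual two-case induction on whether the updated bin overtakes the old maximum), and since the $t_j$ average to $p/M$ this gives $\abs{t_j-p/M}\leq q_{\max}$ as you claim; the condition $R_r<\avgH{\rvR}-2\epsilon$ then makes $Mq_{\max}$ exponentially small. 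The only cosmetic discrepancy is normalization: elsewhere the paper treats $\V{\cdot,\cdot}$ as the unnormalized variational distance (bounded by $2$), whereas you use the factor $\tfrac12$; this is immaterial to the exponential bound. What your elementary construction buys over simply citing the reference is an explicit, deterministic $\phi$ and an explicit exponent $\eta<\min(\avgH{\rvR}-\epsilon-R_r,\gamma)$; what it costs is that it uses full knowledge of $p_\rvR$, which is available here since the source statistics are assumed known.
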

Consequently, it is not hard to show that, even if the code $\calC_n$ is used with $\phi(\rvR^n)$ in place of $\rvU_{r}$, then 
\begin{align*}
  P_e(\calC_n) \leq 2^{-\kappa n}\quad\text{and}\quad \V{p_{\rvM\rvM_0\rvZ^n},p_{\rvM}p_{\rvM_0\rvZ^n}}\leq 2^{-\kappa n}.
\end{align*}
for some $\kappa>0$. The fact that $L(\calC_n)\leq 2^{-\kappa' n}$ for some $\kappa' >0$ follows from~\cite[Lemma 1]{Csiszar1996}. Combining all rate constraints in the previous lemmas, and since $\epsilon$ can be chosen arbitrarily small, we see that any rate $R<\avgI{\rvX;\rvY|\rvU}-\avgI{\rvX;\rvZ|\rvU}$ such that $\avgI{\rvX;\rvZ|\rvU} \leq \avgH{\rvR}$ is achievable. Note that the constraint on $R_0$ plays no role since it represents a negligible rate of time sharing information to synchronize transmitter and receiver.

\section{Proof of Proposition~\ref{prop:non_uniform}}
\label{sec:proof-proposition-non-uniform}
The proof is similar to that Appendix~\ref{sec:achievability}, with Lemma~\ref{lm:resolvability_renyi} in place of Lemma~\ref{lm:secrecy}. Lemma~\ref{lm:secrecy} is obtained in the special case of $\rvU_{r}$ uniform.
\begin{lemma}
  \label{lm:resolvability_renyi}
    If $\frac{1}{n}R_2(\rvU_{r})>\avgI{\rvX;\rvZ|\rvU}$, then we have $\E[\rvC_n]{\V{p_{\rvM\rvZ^n},p_{\rvM}p_{\rvZ^n}}}\leq 2^{-\beta n}$ for some $\beta >0$.
\end{lemma}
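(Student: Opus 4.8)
The plan is to prove a channel-resolvability (soft-covering) statement in which the usual $1/(\text{number of codewords})$ factor is replaced by $\sum_k p_{\rvU_r}(k)^2 = 2^{-R_2(\rvU_r)}$. First I would reduce the claim to a \emph{conditional} resolvability statement. Conditioning on a fixed cloud-center codeword $u^n=u^n(i)$ and a fixed confidential message $j$, the distribution induced on Eve's output is $Q_{i,j}(z^n)\eqdef\sum_{k}p_{\rvU_r}(k)\,p_{\rvZ|\rvX}^n(z^n\mid x^n(i,j,k))$, while the natural target is $P_i(z^n)\eqdef p_{\rvZ|\rvU}^n(z^n\mid u^n(i))$, which crucially does \emph{not} depend on $j$. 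If I establish $\E[\rvC_n]{\V{Q_{i,j},P_i}}\leq 2^{-\beta n}$ for every $(i,j)$, then a triangle inequality over the messages $j$ together with averaging over $i$ yields $\E[\rvC_n]{\V{p_{\rvM\rvZ^n},p_{\rvM}p_{\rvZ^n}}}\leq 2^{-\beta' n}$, because conditioning on distinct $j$ produces the same limiting distribution $\sum_i p(i)P_i$.

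The heart of the argument is a second-moment computation. Since the codewords $x^n(i,j,k)$ are drawn i.i.d.\ from $p_{\rvX|\rvU}^n(\cdot\mid u^n(i))$, each summand has expectation $P_i(z^n)$, so $Q_{i,j}$ is unbiased, $\E[\rvC_n]{Q_{i,j}(z^n)}=P_i(z^n)$. By independence across $k$,
\[
  \Var{Q_{i,j}(z^n)} = \sum_k p_{\rvU_r}(k)^2\,\Var{p_{\rvZ|\rvX}^n(z^n\mid x^n(i,j,k))} \leq 2^{-R_2(\rvU_r)}\sum_{x^n}p_{\rvX|\rvU}^n(x^n\mid u^n)\,p_{\rvZ|\rvX}^n(z^n\mid x^n)^2 .
\]
This is exactly where $R_2(\rvU_r)$ enters: the weights $p_{\rvU_r}(k)^2$ sum to $2^{-R_2(\rvU_r)}$, replacing the $2^{-nR_r}$ that a uniform selection index would give.

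I would then bound $\E[\rvC_n]{\V{Q_{i,j},P_i}}=\tfrac12\sum_{z^n}\E[\rvC_n]{\abs{Q_{i,j}(z^n)-P_i(z^n)}}$ by splitting $z^n$, and the inner sum over $x^n$, according to strong typicality. On the jointly typical set the information-density bound $p_{\rvZ|\rvX}^n(z^n\mid x^n)\leq 2^{n(\avgI{\rvX;\rvZ|\rvU}+\delta)}P_i(z^n)$ gives
\[
  \sum_{x^n:(x^n,z^n)\in\sts{\epsilon}{n}{\rvX\rvZ|u^n}} p_{\rvX|\rvU}^n(x^n\mid u^n)\,p_{\rvZ|\rvX}^n(z^n\mid x^n)^2 \leq 2^{n(\avgI{\rvX;\rvZ|\rvU}+\delta)}P_i(z^n)^2 ,
\]
so that $\Var{Q_{i,j}(z^n)}\leq 2^{-R_2(\rvU_r)}2^{n(\avgI{\rvX;\rvZ|\rvU}+\delta)}P_i(z^n)^2$ up to the discarded atypical terms. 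Combining Jensen's inequality $\E[\rvC_n]{\abs{Q_{i,j}(z^n)-P_i(z^n)}}\leq\sqrt{\Var{Q_{i,j}(z^n)}}$ with $\sum_{z^n}P_i(z^n)\leq1$ bounds the typical-set contribution by $2^{-\frac{n}{2}\left(\frac1n R_2(\rvU_r)-\avgI{\rvX;\rvZ|\rvU}-\delta\right)}$, which decays exponentially precisely because $\tfrac1n R_2(\rvU_r)>\avgI{\rvX;\rvZ|\rvU}$. The atypical $z^n$, and the atypical pairs $(x^n,z^n)$ discarded above, carry total probability that is exponentially small by standard properties of strongly typical sets and are absorbed into the final exponent $\beta$.

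The main obstacle I anticipate is this second-moment step and its interplay with typicality: one must check that the non-uniform weights factor cleanly out of the variance as $2^{-R_2(\rvU_r)}$, and that restricting to jointly typical pairs simultaneously produces the clean bound $2^{n(\avgI{\rvX;\rvZ|\rvU}+\delta)}P_i(z^n)^2$ while leaving only an exponentially small leftover. Once the exponent $\tfrac12\left(\tfrac1n R_2(\rvU_r)-\avgI{\rvX;\rvZ|\rvU}\right)$ is shown to be strictly positive, the remaining steps---the triangle-inequality combination over messages and the conversion of variational distance to mutual-information leakage via~\cite[Lemma 1]{Csiszar1996}---are routine and mirror Appendix~\ref{sec:achievability}. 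As a consistency check, specializing $\rvU_r$ to the uniform distribution on $\intseq{1}{2^{nR_r}}$ gives $\tfrac1n R_2(\rvU_r)=R_r$ and recovers Lemma~\ref{lm:secrecy}.
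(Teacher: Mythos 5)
Your proposal is correct and follows essentially the same route as the paper's proof: the cloud-mixing second-moment argument in which independence across $k$ makes the non-uniform weights contribute $\sum_k p_{\rvU_r}(k)^2 = 2^{-R_2(\rvU_r)}$ to the variance, combined with a typical/atypical split of $(x^n,z^n)$ and Jensen's inequality, yielding the identical exponent $\tfrac{1}{2}\bigl(\tfrac{1}{n}R_2(\rvU_r)-\avgI{\rvX;\rvZ|\rvU}-\delta(\epsilon)\bigr)$. The only (cosmetic) difference is that you bound the typical-set variance via the information-density ratio $W_{\rvZ^n|\rvX^n}(z^n|x^n)\leq 2^{n(\avgI{\rvX;\rvZ|\rvU}+\delta)}P_i(z^n)$ and $\sum_{z^n}P_i(z^n)\leq 1$, whereas the paper bounds $W_{\rvZ^n|\rvX^n}(z^n|x^n)$ and $q_{\rvZ^n|\rvU^n}(z^n)$ separately by the AEP and then counts the typical $z^n$; these are equivalent on the strongly typical set.
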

The proof relies on a careful analysis and modification of the ``cloud-mixing'' lemma~\cite{ThesisCuff2009} and the notation is that of Appendix~\ref{sec:achievability}. We define the distribution $q_{\rvU^n\rvX^n\rvZ^n}$ on $\calU^n\times\calX^n\times\calZ^n$ as
  \begin{align*}
    q_{\rvU^n\rvX^n\rvZ^n}(u^n,x^n,z^n) = W_{\rvZ^n|\rvX^n}(z^n|x^n)p_{\rvX^n\rvU^n}(x^n,u^n).
  \end{align*}
First note that the variational distance $\V{p_{\rvM\rvZ^n},p_{\rvM}p_{\rvZ^n}}$ can be bounded as follows.
  \begin{align*}
    &\V{p_{\rvM\rvZ^n},p_{\rvM}p_{\rvZ^n}}\\
    &\quad\leq \V{p_{\rvM\rvU^n\rvZ^n},p_{\rvM}p_{\rvU^n\rvZ^n}}\\
    &\quad =\E[\rvU^n\rvM]{\V{p_{\rvZ^n|\rvM\rvU^n},p_{\rvZ^n|\rvU^n}}}\\
    &\quad \leq \E[\rvU^n\rvM]{\V{p_{\rvZ^n|\rvM\rvU^n},q_{\rvZ^n|\rvU^n}}+\V{q_{\rvZ^n|\rvU^n},p_{\rvZ^n|\rvU^n}}}\\
    &\quad \leq 2 \E[\rvU^n\rvM]{\V{p_{\rvZ^n|\rvM\rvU^n},q_{\rvZ^n|\rvU^n}}}
  \end{align*}
Then, let $\rvU_1^n$ be the sequence in $\calU^n$ corresponding to $\rvM_0=1$. By symmetry of the random code construction, the average of the variational distance $\V{p_{\rvM\rvZ^n},p_{\rvM}p_{\rvZ^n}}$ over randomly generated codes $\rvC_n$ satisfies
  \begin{multline*}
    \E[\rvC_n]{\V{p_{\rvM\rvZ^n},p_{\rvM}p_{\rvZ^n}}}\\
    \leq 2\E[\rvC_n]{\V{p_{\rvZ^n|\rvU^n=\rvU_1^n\rvM=1},q_{\rvZ^n|\rvU^n=\rvU_1^n}}},
  \end{multline*}
  where
  \begin{align*}
    p_{\rvZ^n|\rvU^n=\rvU_1^n\rvM=1}(z^n)=\sum_{k=1}^{2^{nR_r}}W_{\rvZ^n|\rvX^n}(z^n|x^n(1,1,k))p_{\rvU_{r}}(k).
  \end{align*}
The average over the random codes can be split between the average of $\rvU_1^n$ and the random code $\rvC_n(u_1^n)$ for a fixed value of $u_1^n$, so that
  \begin{align*}
    &\E[\rvC_n]{\V{p_{\rvZ^n|\rvU^n=\rvU_1^n\rvM=1},q_{\rvZ^n|\rvU^n=\rvU_1^n}}} \\
    &= \sum_{u_1^n\in\calU^n}p_{\rvU^n}(u_1^n)\E[\rvC_n(u_1^n)]{\V{p_{\rvZ^n|\rvU^n=u_1^n\rvM=1},q_{\rvZ^n|\rvU^n=u_1^n}}}\\
    &\leq 2\P{\rvU^n\notin \sts{\epsilon}{n}{\rvU}}\\
    &+ \!\!\!\!\sum_{u_1^n\in\sts{\epsilon}{n}{\rvU}}p_{\rvU^n}(u_1^n)\E[\rvC_n(u_1^n)]{\V{p_{\rvZ^n|\rvU^n=u_1^n\rvM=1},q_{\rvZ^n|\rvU^n=u_1^n}}},
  \end{align*}
  where the last inequality follows from the fact that the variational distance is always less than 2. By construction, the first term on the right-hand side vanishes as $n$ gets large; we now proceed to bound the expectation in the second term. First note that, for any $z^n\in\calZ^n$,
  \begin{align*}
&\E[\rvC_n(u_1^n)]{p_{\rvZ^n|\rvU^n=u_1^n\rvM=1}(z^n)}\\
&=\E[\rvC_n(u_1^n)]{\sum_{k=1}^{2^{nR_r}}W_{\rvZ^n|\rvX^n}(z^n|x^n(1,1,k))p_{\rvU^r}(k)}\\
&=\sum_{k=1}^{2^{nR_r}}\E[\rvC_n(u_1^n)]{ W_{\rvZ^n|\rvX^n}(z^n|x^n(1,1,k))}p_{\rvU^r}(k)\\
&=q_{\rvZ^n|\rvU^n=u_1^n}(z^n).
  \end{align*}
  We now let $\mathbf{1}$ denote the indicator function and we define
  \begin{align*}
    p^{(1)}(z^n)&\eqdef\sum_{k=1}^{2^{nR_r}}W_{\rvZ^n|\rvX^n}(z^n|x^n(1,1,k))p_{\rvU_{r}}(k)\\
    &\phantom{-------}\mathbf{1}\{(x^n(1,1,k),z^n)\in\sts{\epsilon}{n}{\rvX\rvZ|u_1^n}\},\\
    p^{(2)}(z^n)&\eqdef\sum_{k=1}^{2^{nR_r}}W_{\rvZ^n|\rvX^n}(z^n|x^n(1,1,k))p_{\rvU_{r}}(k)\\
    &\phantom{-------}\mathbf{1}\{(x^n(1,1,k),z^n)\notin\sts{\epsilon}{n}{\rvX\rvZ|u_1^n}\},
  \end{align*}
 so that we can upper bound $\V{p_{\rvZ^n|\rvU^n=u_1^n\rvM=1},q_{\rvZ^n|\rvU^n=u_1^n}}$ as
 \begin{align}
   &\V{p_{\rvZ^n|\rvU^n=u_1^n\rvM=1},q_{\rvZ^n|\rvU^n=u_1^n}}\nonumber{}\\
   &\leq \sum_{z^n\notin\sts{\epsilon}{n}{\rvZ|u_1^n}}\abs{p_{\rvZ^n|\rvU^n=u_1^n\rvM=1}(z^n)-q_{\rvZ^n|\rvU^n=u_1^n}(z^n)}\label{eq:1}\\
   &\quad +\sum_{z^n\in\sts{\epsilon}{n}{\rvZ|u_1^n}}\abs{p^{(1)}(z^n)-\E{p^{(1)}(z^n)}}\label{eq:2}\\
   &\quad +\sum_{z^n\in\sts{\epsilon}{n}{\rvZ|u_1^n}}\abs{p^{(2)}(z^n)-\E{p^{(2)}(z^n)}}\label{eq:3}.
 \end{align}
 Taking the expectation of the term in~\eqref{eq:1} over $\rvC_n(u_1^n)$, we obtain
 \begin{align*}
   &\E{\sum_{z^n\notin\sts{\epsilon}{n}{\rvZ|u_1^n}}\abs{p_{\rvZ^n|\rvU^n=u_1^n\rvM=1}(z^n)-q_{\rvZ^n|\rvU^n=u_1^n}(z^n)}}\\
   &\leq \sum_{z^n\notin\sts{\epsilon}{n}{\rvZ|u_1^n}}\E{\max(p_{\rvZ^n|\rvU^n=u_1^n\rvM=1}(z^n),q_{\rvZ^n|\rvU^n=u_1^n}(z^n))}\\
   & = \sum_{z^n\notin\sts{\epsilon}{n}{\rvZ|u_1^n}}q_{\rvZ^n|\rvU^n=u_1^n}(z^n),
 \end{align*}
which vanishes as $n$ goes to infinity. Similarly, taking the expectation of the term in~\eqref{eq:3} over $\rvC_n(u_1^n)$, we obtain
\begin{align*}
  &\E{\sum_{z^n\in\sts{\epsilon}{n}{\rvZ|u_1^n}}\abs{p^{(2)}(z^n)-\E{p^{(2)}(z^n)}}}\\
  &\leq\E{\sum_{z^n\in\calZ^n}\abs{p^{(2)}(z^n)-\E{p^{(2)}(z^n)}}}\\
  &\leq \sum_{z^n\in\calZ^n}\E{p^{(2)}(z^n)}\\
  &=\sum_{z^n\in\calZ^n}\mathbb{E}\left(W_{\rvZ^n|\rvX^n}(z^n|\rvX^n(1,1,1))\right.\\
  &\phantom{----------}\left.\mathbf{1}\{(\rvX^n(1,1,1),z^n)\notin\sts{\epsilon}{n}{\rvX\rvZ|u_1^n}\right)\\
  &=\sum_{(x^n,z^n)\notin\sts{\epsilon}{n}{\rvX\rvZ|u_1^n}}q_{\rvZ^n\rvX^n|\rvU^n=u_1^n}(z^n,x^n),
\end{align*}
which vanishes as $n$ goes to infinity. Finally, we focus on the expectation of the term in~\eqref{eq:2} over $\rvC_n(u_1^n)$. For $z^n\in \sts{\epsilon}{n}{\rvZ|u_1^n}$, Jensen's inequality and the concavity of $x\mapsto \sqrt{x}$ guarantee that
 \begin{align*}
   \E{\abs{p^{(1)}(z^n)-\E{p^{(1)}(z^n)}}} \leq \sqrt{\Var{p^{(1)}(z^n)}}.
 \end{align*}
 In addition,
 \begin{align*}
   &\Var{p^{(1)}(z^n)}=
   \sum_{k=1}^{2^{nR_r}}p_{\rvU_{r}}(k)^2\text{Var}\left({W_{\rvZ^n|\rvX^n}(z^n|\rvX^n(1,1,k))}\right.\\
   &\phantom{-----------}\left. \mathbf{1}\{(\rvX^n(1,1,k),z^n)\in\sts{\epsilon}{n}{\rvX\rvZ|u_1^n}\}\right)
 \end{align*}
 Note that
 \begin{align*}
   &\text{Var}\left({W_{\rvZ^n|\rvX^n}(z^n|\rvX^n(1,1,k))}\mathbf{1}\{(\rvX^n(1,1,k),z^n)\in\sts{\epsilon}{n}{\rvX\rvZ|u_1^n}\}\right)\\
   &=\sum_{x^n\in\calX^n}p_{\rvX^n|\rvU^n=u_1^n}(x^n)\\
   &\phantom{------}
   \left({W_{\rvZ^n|\rvX^n}(z^n|x^n)}\mathbf{1}\{(x^n,z^n)\in\sts{\epsilon}{n}{\rvX\rvZ|u_1^n}\}\right)^2\\
   &=\sum_{x^n:(x^n,z^n)\in\sts{\epsilon}{n}{\rvX\rvZ|u_1^n}}p_{\rvX^n|\rvU^n=u_1^n}(x^n) W_{\rvZ^n|\rvX^n}(z^n|x^n)^2\\
   &\stackrel{(a)}{\leq}2^{-n(\avgH{\rvZ|\rvX}-\delta(\epsilon))}\\
   &\phantom{---}\sum_{x^n:(x^n,z^n)\in\sts{\epsilon}{n}{\rvX\rvZ|u_1^n}}p_{\rvX^n|\rvU^n=u_1^n}(x^n) W_{\rvZ^n|\rvX^n}(z^n|x^n)\\
   &\leq 2^{-n(\avgH{\rvZ|\rvX}-\delta(\epsilon))} q_{\rvZ^n|\rvU^n=u_1^n}(z^n)\\
   &\stackrel{(b)}{\leq}2^{-n(\avgH{\rvZ|\rvX}+\avgH{\rvZ|\rvU}-\delta(\epsilon))},
 \end{align*}
where $(a)$ and $(b)$ follow from the AEP; therefore, 
 \begin{align*}   
   \Var{p^{(1)}(z^n)}&\leq 2^{-n(\avgH{\rvZ|\rvX}+\avgH{\rvZ|\rvU}-\delta(\epsilon))}\sum_{k=1}^{2^{nR_r}}p_{\rvU_{r}}(k)^2\\
   &\leq 2^{-n(\avgH{\rvZ|\rvX}+\avgH{\rvZ|\rvU}-\delta(\epsilon))+\frac{R_2(\rvU_{r})}{n}}.
 \end{align*}
 and
 \begin{multline*}
   \sum_{z^n\in\sts{\epsilon}{n}{\rvZ|u_1^n}}\E{\abs{p^{(1)}(z^n)-\E{p^{(1)}(z^n)}}}\\
   \begin{split}
     &\leq 2^{n\avgH{\rvZ|\rvU}}2^{-\frac{n}{2}(\avgH{\rvZ|\rvX}+\avgH{\rvZ|\rvU}-\delta(\epsilon)+\frac{R_2(\rvU_{r})}{n})}\\
     &= 2^{-\frac{n}{2}(\frac{R_2(\rvU_{r})}{n}-\avgI{\rvX;\rvZ|\rvU}-\delta(\epsilon))}
   \end{split}
 \end{multline*}
 Hence, if $\frac{R_2(\rvU_{r})}{n}>\avgI{\rvX;\rvZ|\rvU}+\delta(\epsilon)$, the sum vanishes as $n$ goes to infinity, which concludes the proof. Note that if $\rvU_r$ is uniform, then $R_2(\rvU_r)=nR_r$, and we obtain Lemma~\ref{lm:secrecy}.

\newpage
\bibliographystyle{IEEEtran}
\bibliography{isit2012}

\end{document}